\documentclass[submission,copyright,creativecommons]{eptcs}

\usepackage{iftex}

\usepackage[frozencache=true, cachedir=minted-cache]{minted}
\usepackage{amsmath}
\usepackage{amsthm}
\usepackage{graphicx}
\usepackage{multirow}
\usepackage{amssymb}
\usepackage{adjustbox}
\usepackage{tikz-cd}
\usepackage{fixltx2e}
\usepackage{algorithm}
\usepackage[noend]{algpseudocode}

\ifpdf
  \usepackage{underscore}         
  \usepackage[T1]{fontenc}        
\else
  \usepackage{breakurl}           
\fi

\title{On the Independencies Hidden in the Structure of a Probabilistic Logic Program}
\author{Kilian Rückschloß
\institute{ Ludwig-Maximilians-Universität München \\
			Oettingenstra\ss{e} 67, 80538 München, Germany\\
         \email{kilian.rueckschloss@lmu.de}}
\and
Felix Weitkämper 
\institute{ Ludwig-Maximilians-Universität München \\
			Oettingenstra\ss{e} 67, 80538 München, Germany\\
\email{felix.weitkaemper@lmu.de}
}
}
\newcommand{\titlerunning}{On the Independencies Hidden in the Structure of a PLP}
\newcommand{\authorrunning}{K.~Rückschloß, F.~Weitkämper}

\hypersetup{
  bookmarksnumbered,
  pdftitle    = {\titlerunning},
  pdfauthor   = {\authorrunning},
  pdfsubject  = {EPTCS},               
}

\newtheorem{definition}{Definition}

\newtheorem{lemma}{Lemma}

\newtheorem{theorem}{Theorem}
\newtheorem{example}{Example}
\newtheorem{proposition}{Proposition}
\newtheorem{program}{Program}

\DeclareMathOperator{\pa}{pa}
\DeclareMathOperator{\var}{var}

\DeclareMathOperator{\effect}{effect}
\DeclareMathOperator{\causes}{causes}
\DeclareMathOperator{\condition}{cond}

\DeclareMathOperator{\Graph}{Graph}

\DeclareMathOperator{\Pa}{Pa}

\begin{document}
\maketitle

\begin{abstract}
Pearl and Verma developed d-separation as a widely used graphical criterion to reason about the conditional independencies that are implied by the causal structure of a Bayesian network. As acyclic ground probabilistic logic programs correspond to Bayesian networks on their dependency graph, we can compute conditional independencies from d-separation in the latter.

In the present paper, we generalize the reasoning above to the non-ground case. First, we abstract the notion of a probabilistic logic program away from external databases and probabilities to obtain so-called program structures. We then present a correct meta-interpreter that decides whether a certain conditional independence statement is implied by a program structure on a given external database. Finally, we give a fragment of program structures for which we obtain a completeness statement of our conditional independence oracle. We close with an experimental evaluation of our approach revealing that our meta-interpreter performs significantly faster than checking the definition of independence using exact inference in ProbLog 2. 
\end{abstract}

\section{Introduction}

A probabilistic logic program is a logic program, in which each clause holds with a specified probability. The most common semantics for these programs is the distribution semantics \cite{DistributionSemantics}, which assigns to each ground program a joint probability distribution over the atoms occurring in it. It is the basis for many programming languages such as the  Independent Choice Logic \cite{IndependentChoiceLogic}, PRISM \mbox{\cite{PRISM}}, Logic Programs with Annotated Disjunctions \cite{LPAD} and ProbLog \cite{Problog}.

Since conditional independence is a rather fundamental notion in probability theory, it is natural to ask for its counterpart in probabilistic logic programming. Moreover, considering the work of Holtzen et al.~\cite{dice}, such an analysis may contribute to speed up lifted inference. In this paper, we extend the effort of Rückschloß and Weitkämper~\cite{ilp2021} to establish a calculus deriving the conditional independencies that are determined by the clause structure of a ProbLog program. Let us illustrate this problem in the following example:

\begin{example}
We consider storages, which consist of rooms, tanks, employees and liquids that are given by the unary predicates $room/1$, $tank/1$, $employee/1$, $liquid/1$, respectively. For each storage a database table $passage (R,R')$ tells us, between which rooms $R$ and $R'$ we find a passage. Moreover, in each room~$R$ we find tanks $T$, denoted by~$in (T,R)$ and a tank $T$ may store a liquid $L$, denoted by $stores (T,L)$. Finally, we assume a database table~$flammable(L)$, indicating the flammable liquids. In this context, it makes sense to assume that each tank stores at most one liquid which results in the following integrity constraint:
\begin{align}
\tag{Const} \label{example - integrity constaint}
\perp \leftarrow tank(T), liquid(L_1), liquid(L_2), L_1 \not\doteq L_2, stores(T,L_1), stores(T,L_2).
\end{align}

With a certain probability $\_$ we find an employee $E$ opening a tank $T$, denoted by~$opens (E,T)$. This leads to the following ProbLog clause:
\begin{align}\tag{RC1}
\_ :: opens(E,T) \leftarrow employee(E), tank(T).
\label{example - opens}
\end{align} 
Further, if employee $E$ opens tank $T$, it may be that $E$ does not close the tank $T$ properly, which then causes the tank $T$ to leak, denoted by $leaks (T)$. Again we assume that we don't know the exact probabilities and capture the mechanism in the following clause:
\begin{align}\tag{RC2}
\_ :: leaks(T) \leftarrow employee(E), tank(T), opens(E,T).
\label{example - leaks}
\end{align}   
   
Moreover, an employee $E$ may smoke in a room $R$, denoted by $smokes (E,R)$:
\begin{align}\tag{RC3}
\_ :: smokes(E,R) \leftarrow employee(E), room(R).
\label{example - smokes}
\end{align}
If employee $E$ smokes in a room $R$, which contains a leaking tank storing an flammable liquid, this may cause a fire in room $R$. Further, the smoke immediately spreads to the rooms $R_1$, which are connected to the room $R$ by passages, where it may trigger a sensor. In this case, we observe the event $fire (R_1)$. This mechanism is captured in the clauses  below:
\begin{align}
& connected(R,R) \leftarrow room(R). \nonumber \\
& connected(R,R_1) \leftarrow room(R), room(R_1), room(R_2), passage(R_2,R_1), \nonumber \\ 
& \text{ }~~~~connected(R, R_2). \tag{Int}\label{example - connected} \\
& \_ :: fire(R_1) \leftarrow room(R_1), room(R), employee(E), tank(T), liquid(L), in(T,R), \nonumber \\ 		
& \text{	}~~~~(connected(R,R_1); connected(R_1,R)), stores(T,L), flammable(L),  \nonumber \\
& \text{	}~~~~smokes(E,R), leaks(T). \tag{RC4}\label{example - fire}
\end{align}

Given (\ref{example - opens}), (\ref{example - leaks}), (\ref{example - smokes}), (\ref{example - fire}), (\ref{example - integrity constaint}) and~(\ref{example - connected}) together with a concrete storage, we aim to answer queries about conditional independencies. For instance, we want to determine whether the event $smokes(e1,r)$ of employee $e1$ smoking in room $r$ is independent of the event~$opens(e2,t)$ of employee~$e2$ opening the tank $t$ without knowing the probabilities of the clauses~\mbox{(\ref{example - opens}), (\ref{example - leaks}), (\ref{example - smokes})} and~(\ref{example - fire}). Further, we would also like to predict how things change if we observe $fire(r_1)$ a fire in room~$r_1$.  
\label{example - formulation}
\end{example}

We accomplish our goal by grounding ProbLog programs to Bayesian networks, where we can apply the theory of d-separation \cite{d-separation} to derive the desired independence statements. In this way we can efficiently reason about conditional independencies. Further, we highlight that the theory of d-separation lies at the basis of constraint based causal structure discovery, i.e.~it enables us to reason about causal relationships on the basis of observational data. Hence, we suppose that the present work also serves as a starting point in the development of causal structure discovery techniques for probabilistic logic programming.

\section{\mbox{On the Independencies Hidden in a Propositional Causal Structure}}
\label{section - the BN case}

At the beginning, we discuss how conditional independencies can be inferred from a causal structure in the propositional case. Here, we identify a \textbf{causal structure} on a set of random variables $\textbf{V}$ with a directed acyclic graph $G$, i.e.~a partial order, on $\textbf{V}$. The intuition is that $X$ is a \textbf{cause} of $Y$ if there is a directed path from $X$ to $Y$ in $G$. In this case, we also say that~$Y$ is an \textbf{effect} of $X$. Further, we say that~$X$ is a \textbf{direct cause} of~$Y$ if the edge $X \rightarrow Y$ exists in $G$, i.e. if and only if the node $X \in \Pa(Y)$ lies in the set~$\Pa(Y)$ of \textbf{parents} of~$Y$. 

\begin{example}
Consider a road that passes by a field with a sprinkler in it. The sprinkler is switched on by a weather sensor and the pavement of the road may be wet, denoted by $wet$, because the sprinkler is on, denoted $sprinkler$ or because it rains, denoted by $rain$. Further, we know that the events $rain$ and $sprinkler$ are caused by the season, denoted by $season$, as they both are triggered by the weather. Finally, we note that a wet road is more likely to be slippery, denoted by $slippery$. The situation above gives rise to the following causal structure on the random variables \mbox{$\textbf{V} := \{ season, rain, sprinkler, wet, slippery \}$}:
\begin{equation}
\begin{tikzcd}
                             & rain \arrow[rd]      &               &          \\
season \arrow[ru] \arrow[r] & sprinkler \arrow[r] & wet \arrow[r] & slippery 
\end{tikzcd}
\label{example - sprinkler}
\end{equation}
In particular, we find that $season$ is a cause of slippery but not a direct cause, whereas $wet$ is a direct cause of $slippery$. Further, there is no causal relationship between $sprinkler$ and $rain$.
\label{example - causal structure}
\end{example} 

Next, a given probability distribution $\pi$ on the random variables $\textbf{V}$ is consistent with a causal structure~$G$ if the influence of any cause $X$ on an effect $Y$ is moderated by the direct causes of $Y$. This intuition is formally captured in the following definition:

\begin{definition}[Markov Condition]
We say that the distribution $\pi$ on the set of random variables $\textbf{V}$ satisfies the \textbf{Markov condition} with respect to a causal structure $G$ on $\textbf{V}$, if every random variable $X \in \textbf{V}$ is independent of its causes in $G$, once we observe its direct causes $\Pa (X)$. In this case, we write $\pi \models G$.
\end{definition} 

\begin{example}
In Example \ref{example - causal structure} the Markov condition states for instance that the influence of $season$ on the event $slippery$ is completely moderated by the event $wet$. Once we know that the pavement of the road is wet, we expect it to be slippery regardless of the event that caused the road to be wet. 
\end{example}

If a distribution $\pi \models G$ satisfies the Markov condition with respect to a given causal structure $G$, it is represented by a Bayesian network on $G$ and vice versa~\mbox{\cite[§1.2.3]{Causality}}:

\begin{definition}[Bayesian Network]
A \textbf{Bayesian network} on a set of random variables~$\textbf{V}$ consists of a causal structure~$G$ on $\textbf{V}$ and the probability distributions $\pi (X \vert \Pa (X))$ of the random variables $X \in \textbf{V}$ conditioned on their direct causes in~$G$. A Bayesian network gives rise to a joint probability distribution on~\mbox{$\textbf{V} = \{ X_1, ... ,X_k \}$} by setting
$$
\pi \left( X_1 = x_1,..., X_k = x_k \right) :=
\prod_{i=1}^k \pi \left( X_i = x_i \vert \pa(X_i) \right), 
$$   
where~\mbox{$\pa(X_i) := \{ X_j = x_j \vert X_j \in \Pa (X_i) \}$}. 
\label{definition - Bayesian network}
\end{definition}

The Markov condition equips a causal structure with a semantics that is given by conditional independence statements. Further, Verma and Pearl \cite{d-separation} derive d-separation as a criterion to compute all conditional independencies that follow if we apply the Markov condition to a given causal structure.

\begin{definition}[d-Separation]
Let $G$ be a directed acyclic graph, i.e.~it is a causal structure. An \textbf{undirected path} $P$ between two nodes $A$ and $B$ is an alternating sequence of nodes and edges
$$
P = R_0 \stackrel{E_1}{-} R_1 \stackrel{E_2}{-} R_2 \stackrel{E_3}{-} ... \stackrel{E_{n-1}}{-} R_{n-1} \stackrel{E_n}{-} R_n,
$$ 
where $E_i \in \{ R_{i-1} \rightarrow R_i \text{, } R_{i-1} \leftarrow R_i \}$ for all $1 \leq i \leq n$. We call a node $R_i$ of $P$ a \textbf{collider} if $P$ is of the form~\mbox{$... \rightarrow R_i \leftarrow ...$}, otherwise $R_i$ is said to be a \textbf{non-collider} of $P$.
 
Further, let $\textbf{Z}$ be a set of nodes. We say that a node $N$ is \textbf{blocked} by~$\textbf{Z}$ if it lies in~$\textbf{Z}$, i.e.~if we have that $N \in \textbf{Z}$. Moreover, $N$ is said to be \textbf{activated} by $\textbf{Z}$ if there exists a directed path from $N$ to a node in $\textbf{Z}$. The undirected path $P$ is a \textbf{d-connecting path} with respect to the \textbf{observations} $\textbf{Z}$ if every non-collider $N$ of $P$ is not blocked and if
every collider~$C$ of~$P$ is activated.
  
We say that $\textbf{Z}$ \textbf{d-connects} two nodes $A$ and $B$ if there exists a d-connecting path between~$A$ and~$B$ with respect to $\textbf{Z}$. Otherwise, we say that $\textbf{Z}$ \textbf{d-separates} $A$ and~$B$. Finally, two sets of nodes $\textbf{A}$ and $\textbf{B}$ are said to be \textbf{d-separated} by $\textbf{Z}$ if $\textbf{Z}$ d-separates~$A$ and~$B$ for every $A \in \textbf{A}$ and every $B \in \textbf{B}$. Otherwise, the sets $\textbf{A}$ and $\textbf{B}$ are \textbf{d-connected} by $\textbf{Z}$.
\label{definition - d-connecting path}
\end{definition}

Note that the term ``d-connected'' is a shorthand for ``directionally connected''~\mbox{\cite{IdentifyingIndependenceInBayesianNetworks}}.

\begin{example}
Let us consider the causal structure (\ref{example - sprinkler}) again and take $\textbf{Z} := \{ slippery \}$ for the observations. We find the following \mbox{d-connecting path}~$P := season \rightarrow rain \rightarrow wet \leftarrow sprinkler$. The intuition behind this d-connecting path is as follows:

Assume we observe the event $\textbf{Z}$. We know that this increases the probability for~$wet$, which itself is triggered by $rain$ or $sprinkler$. If we additionally suppose that it is summer, this decreases the probability for $rain$, which increases the probability of $sprinkler$ as we have an increased probability for $wet$. To summarize we expect $season$ and $sprinkler$ to be dependent once we observed $slippery$.

Note that the argument above does not go through anymore, if we observe additionally that it rains or if we observe nothing.  
\label{example - d - separation}
\end{example}

The reasoning of Example \ref{example - d - separation} is now formalized in the following theorem.

\begin{theorem}[Verma and Pearl \cite{d-separation}]
Let $G$ be a causal structure on the set $\textbf{V}$, let ${\textbf{Z} := \{ Z_1,...,Z_n \} \subseteq \textbf{V}}$ be a subset of nodes and let $A,B \in \textbf{V}$ be nodes of $G$.
If $\textbf{Z}$ d-separates $A$ and $B$, we obtain that~$A$ and~$B$ are independent conditioned on $\textbf{Z}$ in every distribution $\pi \models G$, which is Markov to $G$. Here, being conditionally independent means that 
\begin{align}
& \forall_{a \text{ value of } A} \text{ } \forall_{b \text{ value of } B} \text{ }
\forall_{z_1,...,z_n \text{ values of } Z_1,...,Z_n} \text{ : } \nonumber \\ 
& \pi(A = a, B = b \vert \{ Z_i = z_i \}_{i=1}^n) =
\pi(A = a \vert \{ Z_i = z_i \}_{i=1}^n) \cdot
\pi(B = b \vert \{ Z_i = z_i \}_{i=1}^n).~\square  
\label{equation - definition of conditional independence}
\end{align}
\label{theorem - correctness of d-separation}
\end{theorem}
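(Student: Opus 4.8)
The plan is to prove the soundness of d-separation by reducing directed separation in $G$ to ordinary vertex separation in an associated \emph{undirected} graph, where the conditional independence drops out of a factorisation argument. Throughout I may assume the factorisation $\pi(X_1,\dots,X_k) = \prod_{i} \pi(X_i \mid \pa(X_i))$ supplied by Definition~\ref{definition - Bayesian network}, since a distribution satisfies the Markov condition with respect to $G$ exactly when it factorises in this way.

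First I would restrict attention to the relevant part of the graph. Let $\mathbf{W}$ be the ancestral closure $\mathrm{An}(\{A,B\} \cup \textbf{Z})$, i.e.\ $A$, $B$, the $Z_i$ together with every node admitting a directed path into one of them. Because no node outside $\mathbf{W}$ is a parent of a node inside $\mathbf{W}$, eliminating the outside variables from the factorisation in reverse topological order is a telescoping sum: the topologically last remaining outside node is a sink, its factor $\pi(X_i \mid \pa(X_i))$ sums to $1$ and disappears, and we repeat. Hence the marginal of $\pi$ on $\mathbf{W}$ again factorises as $\prod_{X \in \mathbf{W}} \pi(X \mid \pa(X))$ over the induced sub-DAG $G[\mathbf{W}]$. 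Next I would pass to the \emph{moral graph}: the undirected graph $M$ on $\mathbf{W}$ obtained by joining every pair of nodes that share a common child in $\mathbf{W}$ and then forgetting all orientations. Each surviving factor $\pi(X \mid \pa(X))$ depends only on the set $\{X\} \cup \Pa(X)$, and by construction this set is a clique of $M$; so the marginal on $\mathbf{W}$ is a product of non-negative clique potentials of $M$, that is, an undirected Markov field on $M$.

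With this in hand, the third step is the easy half of the Hammersley--Clifford correspondence: if $\textbf{Z}$ separates $A$ from $B$ in $M$ in the ordinary graph-theoretic sense, then $A$ and $B$ are independent given $\textbf{Z}$. Concretely, I would fix the values of $\textbf{Z}$ and partition $\mathbf{W} \setminus \textbf{Z}$ into the nodes reachable from $A$ in $M \setminus \textbf{Z}$ and the rest; separation guarantees $B$ lies outside the first block, each clique potential --- once the variables in $\textbf{Z}$ are fixed --- is a function of variables from a single block, and regrouping the product factorises $\pi(\,\cdot \mid \textbf{Z}\,)$ into an $A$-part and a $B$-part, which is exactly the conditional independence~(\ref{equation - definition of conditional independence}).

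The remaining and genuinely delicate step is the \emph{moralisation lemma}: that d-separation of $A$ and $B$ by $\textbf{Z}$ in $G$ forces separation of $A$ and $B$ by $\textbf{Z}$ in $M$. I would prove the contrapositive, lifting an undirected path in $M$ that avoids $\textbf{Z}$ to a d-connecting path in $G$. The subtlety lives entirely in the marriage edges: a moral edge between two parents of a common child $C$ must be realised by routing through $C$ as a collider $\cdot \to C \leftarrow \cdot$, and the definition of a d-connecting path then demands that $C$ be \emph{activated}, i.e.\ that $C$ admit a directed path into $\textbf{Z}$. Membership of $C$ in $\mathbf{W}$ only guarantees a directed path into $\{A,B\} \cup \textbf{Z}$, so when that path lands in $A$ or $B$ rather than in $\textbf{Z}$ the naive edge-by-edge lift produces an inactive collider and fails; one must instead reroute along the witnessing directed path toward the endpoint. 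Handling this case analysis --- choosing the right directed escape route for each collider created by moralisation while keeping the global alternating path valid and $\textbf{Z}$-avoiding --- is where essentially all the work sits; the marginalisation and factorisation manipulations of the earlier steps are routine by comparison.
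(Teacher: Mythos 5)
The paper itself offers no proof of this theorem: it is imported as a known result from Verma and Pearl (note the $\square$ placed at the end of the statement), and the paper's appendix only proves the completeness result, Theorem~\ref{theorem - completeness}. So there is no in-house argument to compare yours against; what you have written is the classical moralization proof of the soundness of d-separation (in the style of Lauritzen, Dawid, Larsen and Leimer), and it is correct in outline. Your three preparatory steps --- marginalising onto the ancestral closure $\mathbf{W}$ in reverse topological order, noting that each surviving factor $\pi(X \mid \Pa(X))$ is supported on a clique of the moral graph $M$, and the factorisation-implies-independence argument for undirected separation --- are all sound and standard, and you correctly identify where the genuine difficulty sits: the moralisation lemma, whose collider-activation problem is indeed resolved by rerouting along the directed escape path toward an endpoint (legitimate because any node of that escape path lying in $\textbf{Z}$ would, contrary to assumption, activate the collider after all).

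Two refinements are needed to make the sketch airtight. First, the subtlety does not live \emph{entirely} in the marriage edges: when the lifted path uses two original edges converging on a path node, as in $u \rightarrow v \leftarrow w$, that node $v$ is also a collider of the lifted path and requires activation; since $v \in \mathbf{W}$, the identical rerouting applies, so nothing breaks, but the case analysis must cover it, and you should also argue termination (each reroute strictly decreases the number of inactive colliders, because the discarded segment is replaced by a directed, hence collider-free, one). Second, you assume the Markov condition is equivalent to the factorisation of Definition~\ref{definition - Bayesian network}. As literally stated in the paper --- independence of each variable from its \emph{causes} given its direct causes --- the condition is strictly weaker than factorisation (for an edgeless graph it is vacuous, while factorisation forces full independence); however, the paper itself asserts this equivalence citing Pearl, and under the standard reading (independence from non-descendants given parents) the equivalence is classical, so your use of it is licensed. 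With those two points patched, the proposal is a complete and correct proof of a statement the paper treats as a black box.
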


However, McDermott \cite{FailureOfFaithfulness} demonstrates that in general d-separation does not yield a complete independence oracle. This observation motivates the following definition.

\begin{definition}[Faithfulness]
A distribution $\pi$ is \textbf{(causally) faithful} to a causal structure $G$ on $\textbf{V}$ if it is Markov to~$G$ and if every conditional independence of two random variables~\mbox{$A,B \in \textbf{V}$} with respect to a set of observations $\textbf{Z} \subseteq \textbf{V}$ can be derived from d-separation by Theorem~\ref{theorem - correctness of d-separation}.
\label{definition : faithful Bayesian network}
\end{definition}

Fortunately, Meek \cite{FaithfulnessHoldsAlmostAlways} shows that faithfulness holds for almost all Boolean Bayesian networks in the following sense:

\begin{theorem}
Let $G$ be a causal structure and let $\theta \in [0,1]^n$ be the vector, which determines the conditional distributions that turn $G$ into a Boolean Bayesian network representing the distribution~$\pi$. In this case we obtain finitely many non-trivial polynomial equations such that~$\pi$ is faithful to $G$ unless~$\theta$ solves one of these equations. $\square$    
\label{theorem - obstruction to causal faithfulness}
\end{theorem}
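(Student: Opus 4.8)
The plan is to recast faithfulness as a statement in real algebraic geometry. The distribution $\pi$ of a Bayesian network on $G$ is Markov to $G$ by construction, so by Definition~\ref{definition : faithful Bayesian network} faithfulness can fail only when some conditional independence is not witnessed by d-separation; and by Theorem~\ref{theorem - correctness of d-separation} d-separation always entails independence. Hence unfaithfulness amounts to a \emph{spurious} independence: a triple $(A,B,\mathbf{Z})$ for which $\mathbf{Z}$ d-connects $A$ and $B$, yet $A$ and $B$ happen to be independent given $\mathbf{Z}$ in $\pi$. My first step is to make the dependence of $\pi$ on $\theta$ explicit. Expanding the factorization of Definition~\ref{definition - Bayesian network}, every entry of the joint distribution, and hence every marginal $\pi(A=a,B=b,\mathbf{Z}=\mathbf{z})$, $\pi(A=a,\mathbf{Z}=\mathbf{z})$, $\pi(B=b,\mathbf{Z}=\mathbf{z})$ and $\pi(\mathbf{Z}=\mathbf{z})$, is a polynomial in the coordinates of $\theta$. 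Clearing denominators in the independence condition~(\ref{equation - definition of conditional independence}) turns each instance into the vanishing of the polynomial
$$
p_{a,b,\mathbf{z}}(\theta) := \pi(A=a,B=b,\mathbf{Z}=\mathbf{z})\,\pi(\mathbf{Z}=\mathbf{z}) - \pi(A=a,\mathbf{Z}=\mathbf{z})\,\pi(B=b,\mathbf{Z}=\mathbf{z}),
$$
so that, for the fixed triple, $A$ and $B$ are independent given $\mathbf{Z}$ exactly when all the $p_{a,b,\mathbf{z}}$ vanish at $\theta$ (away from the degenerate locus where some $\pi(\mathbf{Z}=\mathbf{z})=0$, which is itself a null set).

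The crux of the argument is to show that for every d-connected triple $(A,B,\mathbf{Z})$ at least one of the polynomials $p_{a,b,\mathbf{z}}$ is not identically zero. Equivalently, I must exhibit a single parameter vector $\theta_0 \in [0,1]^n$ at which $\mathbf{Z}$-conditional dependence of $A$ and $B$ genuinely occurs; such a witness certifies that the associated $p_{a,b,\mathbf{z}}$ is a non-trivial polynomial. The natural construction follows a d-connecting path $P$ from $A$ to $B$ as guaranteed by d-connection (Definition~\ref{definition - d-connecting path}): along the non-collider segments the parameters are tuned so that each variable strictly correlates with its predecessor, while at each collider the parameters are chosen so that conditioning on the activating descendant in $\mathbf{Z}$ induces correlation between the collider's parents, and every variable off $P$ is made conditionally constant so as to contribute no cancelling term. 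This is the main obstacle, since one must verify that the correlations transmitted across the several edges of $P$ survive marginalising the off-path variables and conditioning on $\mathbf{Z}$; the collider case, where dependence is \emph{created} rather than merely propagated, is the delicate point and is exactly where activation by $\mathbf{Z}$ is used.

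Granting the witness construction, the conclusion is a routine finiteness-plus-measure argument. Because $\mathbf{V}$ is finite there are only finitely many triples $(A,B,\mathbf{Z})$ with $A,B \in \mathbf{V}$ and $\mathbf{Z} \subseteq \mathbf{V}$, and for each d-connected triple I select one non-trivial polynomial $p_{a,b,\mathbf{z}}$ from the previous step. If $\pi$ is unfaithful at $\theta$, then some d-connected triple exhibits a spurious independence, forcing all its $p_{a,b,\mathbf{z}}$, in particular the selected non-trivial one, to vanish at $\theta$. Hence the unfaithful parameter vectors are contained in the union of the zero sets of these finitely many non-trivial polynomials, which is precisely the finite collection of equations claimed. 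Since the zero set of a non-trivial polynomial is a proper algebraic subvariety of $[0,1]^n$ and therefore has Lebesgue measure zero, the finite union is again a null set, and $\pi$ is faithful for every $\theta$ outside it, i.e.\ for almost all Boolean Bayesian networks on $G$.
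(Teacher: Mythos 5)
First, a point of reference: the paper does not prove this theorem at all --- it is quoted from Meek \cite{FaithfulnessHoldsAlmostAlways} (hence the $\square$ and the absence of any appendix proof), so there is no internal proof to compare against and your proposal has to stand on its own. Its overall architecture is indeed the standard one behind Meek's result: encode each conditional-independence instance as the vanishing of polynomials $p_{a,b,\mathbf{z}}(\theta)$, observe that unfaithfulness forces all polynomials of some d-connected triple to vanish, and conclude by finiteness of the set of triples. That bookkeeping is correct; note, in your favour, that the degenerate case $\pi(\mathbf{Z}=\mathbf{z})=0$ needs no special exclusion, since it makes every term of $p_{a,b,\mathbf{z}}$ vanish, so the implication ``independence $\Rightarrow$ all $p_{a,b,\mathbf{z}}$ vanish'' holds for every $\theta$, which is the only direction your argument uses.

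The genuine gap is the step you yourself flag and then assume: ``Granting the witness construction.'' Exhibiting, for every d-connected triple $(A,B,\mathbf{Z})$, one parameter vector $\theta_0$ at which $A$ and $B$ are dependent given $\mathbf{Z}$ is precisely the completeness of d-separation for DAG models (Geiger--Pearl), and it is the entire mathematical content of the theorem: without it you have only placed the unfaithful $\theta$ inside a finite union of zero sets of polynomials that might be identically zero, so the qualifier ``non-trivial'' in the statement --- the part that makes the result meaningful --- remains unproven. Moreover, the one-sentence sketch you give of the construction has a concrete flaw: you propose to make ``every variable off $P$ conditionally constant,'' but when $C$ is a collider on $P$ its activation is witnessed by a directed path from $C$ to some node of $\mathbf{Z}$, and that path consists of off-$P$ variables; if those are frozen to constants, conditioning on the $\mathbf{Z}$-descendant conveys no information about $C$, and no dependence between the parents of $C$ is created. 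The activating paths must instead be made (near-)deterministic copies of their ancestors, and one must then check that correlations propagated across several edges and colliders do not cancel after marginalizing the remaining off-path variables --- exactly the kind of delicate induction on path length, with colliders handled via weak transitivity, that the paper carries out in its appendix for the related Theorem \ref{theorem - completeness}.
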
 

Note that Theorem \ref{theorem - obstruction to causal faithfulness} states that d-separation enables us to derive all conditional independence statements that are implied by a causal structure under the Markov condition. 

Finally, we identify a causal structure $G$ with the database that contains a fact~\mintinline{prolog}{X ---> Y} for every edge $X \rightarrow Y$ in $G$. In this case the predicate \mintinline{prolog}{dseparates/3} in following meta-interpreter decides whether two nodes $X$ and $Y$ are d-separated by a list of observations $\textbf{Z}$.

\begin{program}[Deciding d-Separation]
$\text{ }$
\begin{minted}[mathescape]{prolog}
% Implement hactivates/2 as the transitive closure of (--->)/2 and 
% calculate activated nodes
hactivates(X,X).  hactivates(Z,X):-(Y ---> Z), hactivates(Y,X).
activates([Z|_],X):-hactivates(Z,X).  activates([_|TZ],X):-activates(TZ,X).
% Implement dconnects/3 by case distinction over the last orientation
dconnects(X,Y,Z):-\+member(X,Z),  \+member(Y,Z), dconnects(X,Y,Z,_).
dconnects(X,Y,_,right):-(X--->Y).  dconnects(X,Y,_,left):-(Y--->X).
dconnects(X,Y,Z,right):-(Y1--->Y),\+member(Y1,Z),dconnects(X,Y1,Z,right).
dconnects(X,Y,Z,right):-(Y1--->Y),\+member(Y1,Z),dconnects(X,Y1,Z,left).
dconnects(X,Y,Z,left):-(Y--->Y1),\+member(Y1,Z),dconnects(X,Y1,Z,left).
dconnects(X,Y,Z,left):-(Y--->Y1),activates(Z,Y1),dconnects(X,Y1,Z,right).
% dseparates/3 is the complement of dconnects/3
dseparates(X,Y,Z):-\+dconnects(X,Y,Z).
\end{minted}
\label{program - dconnectes}
\end{program}


\section{A Formalism for Lifted Probabilistic Logic Programming}
\label{Formalism for our Solution}

Recall that events of the trivial probabilities zero and one are independent of every other event. To overcome this obstruction, we introduce a language which separates logical predicates, denoting logical statements with probabilities zero and one, from random predicates, denoting events with a probability lying between zero and one. 

Let us fix a \textbf{query language} i.e.~a language $\mathfrak{Q} \supseteq \mathfrak{L}\supseteq \mathfrak{E}$ in three parts with an \textbf{external vocabulary}~$\mathfrak{E}$ and a \textbf{logical vocabulary}~$\mathfrak{L}$. Here, $\mathfrak{Q}$ is a finite relational vocabulary with equality $\doteq$, i.e.~it consists of a finite set of relation symbols, a finite set of constants as well as a countably infinite set of variables. Further,
$\mathfrak{L}$ is a subvocabulary of  $\mathfrak{Q}$ containing all of the variables and constants of $\mathfrak{Q}$ as well as a (possibly empty) subset of the relation symbols of $\mathfrak{Q}$. Moreover, $\mathfrak{E}$ is a subvocabulary of $\mathfrak{L}$, which satisfies the same properties in $\mathfrak{L}$ as $\mathfrak{L}$ does regarding $\mathfrak{Q}$.

\begin{example}
In Example \ref{example - formulation} the vocabulary $\mathfrak{E}$ consists of the predicates $room/1$, $employee/1$, $tank/1$, $liquid/1$, $passage/2$, $in/2$, $stores/2$ and $flammable/1$, which we assume to be given by a database. Further, $\mathfrak{L}$ extends $\mathfrak{E}$ by the predicate $connected/2$, which is deterministically defined in terms of the predicates of $\mathfrak{E}$. Finally, $\mathfrak{Q}$ extends $\mathfrak{L}$ by the predicates $opens/2$, $leaks/1$, $smokes/2$ and $fire/1$, which we expect to denote non-deterministic random variables.
\label{example - domain vocabulary} 
\end{example}

As usual, an \textbf{atom} is an expression of the form $r(t_1, \dots , t_n)$ or $t_1 \doteq t_2$, where $r$ is a relation symbol and $t_1$ to $t_n$ are constants or variables, and a \textbf{literal} is an expression of the form $A$ or $\neg A$ for an atom $A$. 
It is called an \textbf{external atom} or \textbf{literal} if $r$ is in $\mathfrak{E}$, a \textbf{logical atom} or \textbf{literal} if $r$ is in~$\mathfrak{L}$, an \textbf{internal atom} or \textbf{literal} if $r$ is in~$\mathfrak{L} \setminus \mathfrak{E}$ and a \textbf{random atom} or \textbf{literal} if $r$ is in $\mathfrak{Q} \setminus \mathfrak{L}$. Here, we regard equality $\doteq$ as a relation in $\mathfrak{E}$.
A literal of the form~$A$ is called \textbf{positive} and a literal of the form  $\neg A$ is called \textbf{negative}. A literal $L$ is said to be \textbf{ground} if no variable occurs in it. Finally, we use $\var(E)$ to refer to the variables occurring in a given expression $E$.

\begin{example}
In Example \ref{example - domain vocabulary} $passage (R, R')$ is an external atom, whereas $connected (R,R')$ is an internal atom and $fire(R)$ is a random atom. 
\end{example}

\textbf{Formulas}, as well as \textbf{existential} and \textbf{universal formulas} are defined as usual in \mbox{first-order} logic.
The logical vocabulary will be used to formulate constraints and conditions for our~probabilistic logic program. The purpose of a probabilistic logic program, however, is to define distributions for the random variables determined by the language~$\mathfrak{Q}$. This is done by so-called ProbLog clauses.

\begin{definition}[ProbLog Clause]
A \textbf{generalized ProbLog clause $RC$} is an expression of the form 
$$
\left(\_ :: R \leftarrow R_1,...,R_m, L_1,...,L_n. \right)
=
\left(
\_ :: \effect(RC) \leftarrow \causes(RC) \cup \condition (RC)
\right),
$$
which is given by the following data:
\begin{enumerate}
\item[i)]
a random atom $R := \effect (RC)$, called the \textbf{effect} of $RC$
\item[ii)]
a finite and possibly empty set of random literals $\causes (RC) := \{ R_1,...,R_m \}$, called the \textbf{causes} of~$RC$
\item[iii)]
a finite and possibly empty set of logical literals $\condition (RC) := \{ L_1 , ... , L_n \}$, called the \textbf{condition} of~$RC$   
\end{enumerate}
We call $RC$ \textbf{positive} if the set of causes $\causes(RC)$ contains only positive literals. Further, we obtain a \textbf{ProbLog clause}  
\mbox{$
RC^{\pi(RC)} := \left(\pi (RC) :: R  \leftarrow R_1,...,R_m, L_1,...,L_n.\right)
$}
from the generalized ProbLog clause~$RC$ by choosing a \textbf{probability} $\pi (RC) \in [0,1]$.
\label{definition - random clause}
\end{definition}

In i) and ii) of Definition \ref{definition - random clause} we use the terminology of cause and effect to reflect that under our semantics, a ground program represents a functional causal model~\mbox{\cite[§1.4]{Causality}}. 

\begin{example}
Note that (\ref{example - fire}) of Example~\ref{example - formulation} yields a generalized ProbLog clause. Further, if we choose the probability $\pi(RC4) := 0.6$ we obtain the ProbLog clause $RC4^{0,6}$ below.
\begin{align}
&0.6 :: fire(R1) \leftarrow room(R1), room(R), employee(E), tank(T), liquid(L), in(T,R), \nonumber \\ 		
& \text{	}~~~~(connected(R,R1); connected(R1,R)), stores(T,L), flammable(L),  \nonumber \\
& \text{	}~~~~smokes(E,R), leaks(T). 
\label{example - generalized random clause}
\end{align}
\end{example}

After having established the necessary syntax we proceed to the semantics. Let us begin with the logical expressions. The semantics of logical expressions is given in a straightforward way.

We highlight the unique names assumption in our definition of a structure:

An \textbf{$\mathfrak{L}$-structure} $\Lambda$ consists of a domain $\Delta$, an element of $\Delta$ for every constant in $\mathfrak{L}$, \emph{such that two different constants are mapped to different elements}, and an $n$-ary relation on $\Delta$ for every relation symbol of arity $n$ in~$\mathfrak{L}$.

Whether a logical formula is \textbf{satisfied} by a given $\mathfrak{L}$-structure (under a given \textbf{interpretation} of its free variables) is determined by the usual rules of first-order logic. Finally, note that the semantics of external expressions is defined analogously.

For the semantics of clauses and programs we choose the FCM-semantics \cite{fcm-semantics} since it directly relates an acyclic ground program to a Bayesian network. We start with the definition of a lifted program:

\begin{definition}[Lifted Program and Program Structure]
A \textbf{program structure} $\textbf{P} := (\textbf{R}, \textbf{I}, \textbf{C})$ is a triple, which consists of the following data
\begin{enumerate}
\item[i)] 
a finite set of integrity constraints $\textbf{C}(\textbf{P}) := \textbf{C}$ of the form 
$(\perp \leftarrow L_1, ... , L_k.)$ for logical literals~$L_i$ with $1 \leq i \leq k$, which we call the \textbf{constraints} of $\textbf{P}$. 
\item[ii)]
a finite set of normal clauses $\textbf{I} (\textbf{P}) := \textbf{I}$ of the form 
$(H \leftarrow B_1, ... , B_m.)$ with logical literals~$B_1, ...,B_m$ and an internal atom $H$, which we call the \textbf{internal part}~of~$\textbf{P}$.
\item[iii)]
a finite set of generalized ProbLog clauses $\textbf{R} (\textbf{P}) := \textbf{R}$, which we call the \textbf{random part}~of~$\textbf{P}$.
\end{enumerate}
We say that $\textbf{P}$ is \textbf{stratified} if its internal part $\textbf{I}$ is a stratified set of normal clauses and we say that $\textbf{P}$ is \textbf{positive} if every generalized ProbLog clause of its random part $\textbf{R}(\textbf{P})$ is positive.

A \textbf{choice of parameters} for the program structure $\textbf{P}$ is a function~\mbox{$\pi : \textbf{R} (\textbf{P}) \rightarrow~[0,1]$}. A program structure $\textbf{P}$ and a choice of parameters $\pi$ yield a \textbf{(lifted) program}~$\textbf{P}^{\pi}~:=~(\textbf{R} (\textbf{P})^{\pi}, \textbf{I} (\textbf{P}), \textbf{C} (\textbf{P}))$, where~$\textbf{R} (\textbf{P})^{\pi}$ is the set of ProbLog clauses~\mbox{$
\pi (RC) :: \effect (RC) \leftarrow \causes (RC) \cup \condition (RC) \text{ for } RC \in \textbf{R} (\textbf{P}).
$} 
In this case, $\textbf{C}(\textbf{P})$ are the \textbf{constraints}, $\textbf{I} (\textbf{P})$ is the \textbf{internal part} and $\textbf{R} (\textbf{P})^{\pi}$ is the \textbf{random part} of the program~$\textbf{P}^{\pi}$. Further,~$\textbf{P}$ is called the \textbf{structure} of the program~$\textbf{P}^{\pi}$. Finally, the program~$\textbf{P}^{\pi}$ is \textbf{stratified} or \textbf{positive} if~$\textbf{P}$ is.      
\label{definition - program structure}
\end{definition}

\begin{example}
In Example \ref{example - formulation} we obtain a stratified program structure $\textbf{P}$ with random part (\ref{example - opens}), (\ref{example - leaks}), (\ref{example - smokes}), (\ref{example - fire}), internal part (\ref{example - connected}) and constraints (\ref{example - integrity constaint}).
Further, we obtain a choice of parameters $\pi$ by assigning $\pi (RC1) := 0.8$, $\pi (RC2) := 0.1$, $\pi (RC3) := 0.5$ and $\pi (RC4) := 0.05$. Finally, the choice of parameters $\pi$ gives rise to the following (lifted) program~$\textbf{P}^{\pi}$:
\begin{minted}{prolog}
%Random part
0.8 :: opens(E,T) :- employee (E), tank(T).
0.1 :: leaks(T) :- employee(E), tank(T), opens(E,T).
0.5 :: smokes(E,R) :- employee(E), room(R).
0.05 :: fire(R1) :- employee(E), room(R), room(R1), tank(T), liquid(L), 
	flammable(L), in(T,R), stores(T,L), (connected(R,R1); connected(R1,R)), 
	smokes(E,R),leaks(T).
%Internal part
connected(R,R):-room(R). 
connected(R,R1) :- room(R), room(R1), room(R2), passage(R2,R1), 
	connected(R, R2).
%Constraints
:- tank(T), liquid(L1), liquid(L2), L1 \= L2, stores(T,L1), stores(T,L2).
\end{minted} 
\label{example - program structure}
\end{example}

\begin{definition}[Ground Variable and External Database]
Let $\textbf{P}$ be a stratified program structure and let $\mathcal{E}$ be an $\mathfrak{E}$-structure. In our setting, we may assume without loss of generality that $\mathcal{E}$ is a Herbrand model of a language~$\mathfrak{E}^{\ast}$, which extends the external language $\mathfrak{E}$ by constants. Further, denote by $\mathfrak{L}^*$ and~$\mathfrak{Q}^*$ respectively the extension of the languages $\mathfrak{L}$ and $\mathfrak{Q}$ by the new constants in $\mathfrak{E}^*$. 
We write~\mbox{$
\mathcal{E}^{\textbf{I}} := \mathcal{E}^{\textbf{I}(\textbf{P})} := \{ L \text{ ground atom of } \mathfrak{L}^{\ast} \text{ : } \textbf{I}  \cup \mathcal{E} \models L \}
$} 
for the minimal Herbrand model of $\textbf{I} \cup \mathcal{E}$, which is the result of applying the stratified Datalog program $\textbf{I}$ to $\mathcal{E}$.

Further, we call $\mathcal{E}$ an \textbf{external database} of the program structure~$\textbf{P}$ if it satisfies the constrains of $\textbf{P}$ after applying the Datalog program $\textbf{I}$ to $\mathcal{E}$, i.e. if
$$
\mathcal{E}^{\textbf{I}} \models \underset{\substack{\perp \leftarrow L_1,...,L_n \in \textbf{C} (\textbf{P}) \\
 \kappa \text{ interpretation on } \var (L_1,...,L_n)}}{\bigwedge}
\neg \left( \bigwedge_{i=1}^n L_i^{\kappa} \right)  
.$$ 
A \textbf{ground variable} is a ground atom~\mbox{$G := r (x_1,...,x_n)$} of $\mathfrak{Q}^*$ with a random predicate $r \in \mathfrak{Q}$. Finally, we write $\mathcal{G} (\mathcal{E})$ for the set of all ground variables. 
\label{notation - Herbrand model}
\end{definition}

The term ground variable indicates that we expect $G$ to denote a proper random variable under our semantics. From now on we restrict ourselves to the study of stratified program structures. Hence, let us fix a stratified program structure $\textbf{P}$ for the rest of this work.

\begin{definition}[FCM-Semantics of Lifted Programs]
Let $\pi$ be a choice of parameters and let~$\mathcal{E}$ be an external database for $\textbf{P}$. The \textbf{grounding}~$\textbf{P}^{(\pi, \mathcal{E})}$ of the program structure $\textbf{P}$ with respect to~$\pi$ and $\mathcal{E}$ is the system of Boolean equations given by   
\begin{equation}
G := \bigvee_{\substack{RC \in \textbf{R} (\textbf{P}) \\ \kappa \text{ interpretation on } \var (RC) \\ \effect (RC)^{\kappa} = G \\ \left( \mathcal{E}^{\textbf{I}} , \kappa \right) \models \condition(RC)}}
\left( 
\bigwedge_{C \in \causes (RC)} C^{\kappa} \wedge u \left( RC , \kappa \right) 
\right)
\label{The semantics of a lifted program}
\end{equation}
for every ground variable $G \in \mathcal{G} (\mathcal{E})$. Here, the \textbf{error term} $u \left( RC, \kappa \right)$ is a distinct Boolean random variable with the distribution 
$
\pi \left(u \left( RC, \kappa \right) \right) = \pi (RC) 
$
for every generalized ProbLog clause $RC \in \textbf{R} (\textbf{P})$ and every variable interpretation $\kappa$ on $\var(RC)$. Besides, the error terms $u \left( RC, \kappa \right)$ are assumed to be mutually independent. Finally, the \textbf{grounding} of the program $\textbf{Q} := \textbf{P}^{\pi}$ is given by~$\textbf{Q}^{\mathcal{E}} :=~\textbf{P}^{(\pi, \mathcal{E})}$.  
\label{Semantics of Programs}
\end{definition}

\begin{example}
It is easy to observe that the program structure $\textbf{P}$ of Example \ref{example - program structure} is indeed a stratified program structure in our sense. Now assume we are given a specific storage, which consists of four rooms \mintinline{prolog}{r1,r2,r3} and \mintinline{prolog}{r4}. These rooms are connected by passages as follows: 
\begin{minted}{prolog}
passage(r1,r2), passage(r2,r3)
\end{minted}
Moreover, we have five tanks~\mintinline{prolog}{t1,t2,t3,t4} and \mintinline{prolog}{t5} with 
\begin{minted}{prolog}
in(t1,r1), in(t2,r2), in(t3,r3), in(t4,r4), in(t5,r4).
\end{minted}
The tanks contain two types of liquids \mintinline{prolog}{gasoline} and \mintinline{prolog}{water}, which we describe in the following way: 
\begin{minted}{prolog}
stores(gasoline,t1), stores(gasoline, t2), stores(water, t3), 
stores(water, t4), stores(gasoline,t5), flammable(gasoline)
\end{minted} 
Finally, assume there are two employees Mary and John, which we express simply as~\mintinline{prolog}{mary} and~\mintinline{prolog}{john}. In this case, one checks that the storage above satisfies the integrity constraint~(\ref{example - integrity constaint}), i.e.~we are given an external database $\mathcal{E}$ for the program structure $\textbf{P}$. 

Further, let $\pi$ be the choice of parameters in Example \ref{example - program structure}. For $e \in \{ john , mary \}$, $t \in \{ t_1, t_2 , t_3 , t_4, t_5 \}$ and $r \in \{ r_1, r_2 , r_3, r_4 \}$ the grounding $\textbf{P}^{(\pi , \mathcal{E})}$ is given by the equations of the form:
\begin{enumerate}
\item[•] 
$opens (e,t) := u \left( (\_ :: opens (E,T) \leftarrow employee(E), tank(T).) , \{ E \mapsto e , T \mapsto t \} \right)$ such that \\ $u \left( (\_ :: opens (E,T) \leftarrow employee(E), tank(T).) , \{ E \mapsto e , T \mapsto t \} \right)$ is true with probability~$0.8$
\item[•]
$leaks (t) := opens(e,t) \land u \left( RC2 , \{ E \mapsto e, T \mapsto t \} \right)$ such that $u \left( RC2 , \{ E \mapsto e, T \mapsto t \} \right)$ is true with probability $0.1$
\item[•]
$smokes (e, r) := u \left( RC3 , \{ E \mapsto e, R \mapsto r \} \right)$ such that $u \left( RC3 , \{ E \mapsto e, R \mapsto r \} \right)$ is true with probability $0.5$
\item[•]
\mbox{$fire (r_i) := \bigvee_{j=1}^3 smokes (e, r_j) \land leaks(t_j) \land u (RC4, \{ R_1 \mapsto r_i, E \mapsto e, R \mapsto r_j, T \mapsto t_j \} )$} such that \\ \mbox{$u (RC4, \{ R_1 \mapsto r_i, E \mapsto e, R \mapsto r_j, T \mapsto t_j \} )$} is true with probability $0.05$ for \mbox{$1 \leq i,j \leq 3$}
\item[•]
$fire (r_4) := smokes (e, r_4) \land leaks(t_5) \land u (RC4, \{ R_1 \mapsto r_4, E \mapsto e, R \mapsto r_4, T \mapsto t_5 \} )$ such that \\ $u (RC4, \{ R_1 \mapsto r_4, E \mapsto e, R \mapsto r_4, T \mapsto t_5 \} )$ is true with probability $0.05$
\end{enumerate}
\label{example - grounding of a program structure}
\end{example}

In the present paper, we reason on a syntactic level about the conditional independencies implied by the program structure $\textbf{P}$ and an external database~$\mathcal{E}$. We proceed as Geiger and Pearl~\cite{OnTheLogicOfCausalModels} and restrict ourselves to those independence statements following from d-separation in the corresponding propositional causal structures, which we call ground graphs.

\begin{definition}[Ground Graph and Acyclicity]
Let~$\mathcal{E}$ be an external database for $\textbf{P}$. We define the \textbf{ground graph} $\Graph_{\mathcal{E}}(\textbf{P})$ to be the directed graph on the set of ground variables $\mathcal{G}(\mathcal{E})$, which is obtained by drawing an edge~$G_1 \rightarrow G_2$ if and only if there exists a generalized ProbLog clause $RC \in \textbf{R}(\textbf{P})$, a cause~$C \in \causes (RC)$ and a variable interpretation $\iota$ on $\var (C) \cup \var (\effect (RC))$ such that the following assertions are satisfied:
\begin{enumerate}
\item[i)]
$
\left( \mathcal{E}^{\textbf{I}}, \iota \right) \models \exists_{\var(\condition (RC)) \setminus \left( \var (C) \cup \var (\effect (RC)) \right)} \condition (RC)$
\item[ii)]  
$C^{\iota} \in \{ G_1 , \neg G_1 \}$ 
\item[iii)]
$\effect (RC)^{\iota} = G_2$.
\end{enumerate}
In this case we say that the edge $G_1 \rightarrow G_2$ is \textbf{induced} by $RC$.
Moreover, we call $\textbf{P}$ an \textbf{acyclic} program structure if $\Graph_{\mathcal{E}}(\textbf{P})$ is a directed acyclic graph i.e.~a causal structure for every external database $\mathcal{E}$.
\label{definition - Ground Graph}
\end{definition}  

From now on we assume the program structure $\textbf{P}$ to be acyclic. In this case it is easy to see that the grounding $\text{P}^{(\pi , \mathcal{E})}$ yields a unique expression for every ground variable~\mbox{$G \in \mathcal{G} (\mathcal{E})$} in terms of the error terms $u (RC, \kappa)$ for every choice of parameters $\pi$ and for every external database $\mathcal{E}$. Hence, it induces a unique probability distribution on~$\mathcal{G} (\mathcal{E})$. Rückschloß and Weitkämper \cite{fcm-semantics} further show that this distribution coincides with the distribution semantics of the ProbLog program~\mbox{$\textbf{P}^{\pi} \cup \mathcal{E}$}. Overall Definition~\ref{definition - program structure} yields a causal generalization of the standard semantics for acyclic ProbLog programs. Further, we obtain the following result:   

\begin{proposition}[Rückschloß and Weitkämper \cite{fcm-semantics}]
For every external database $\mathcal{E}$ and for every choice of parameters $\pi$ the grounding~$\textbf{P}^{(\pi, \mathcal{E})}$ yields a distribution, which is Markov to the ground graph $\Graph_{\mathcal{E}} (\textbf{P})$.~$\square$
\label{proposition - ground graph}
\end{proposition}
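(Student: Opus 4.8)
The plan is to recognize the grounding as a structural (functional) causal model whose arrow structure is exactly $\Graph_{\mathcal{E}}(\textbf{P})$, and then to invoke the standard equivalence \cite[§1.2.3]{Causality} between the recursive factorization of Definition~\ref{definition - Bayesian network} and the Markov condition. Concretely, fix an external database $\mathcal{E}$ and a choice of parameters $\pi$. Since $\textbf{P}$ is acyclic, $\Graph_{\mathcal{E}}(\textbf{P})$ is a DAG and, as already observed after Definition~\ref{definition - Ground Graph}, each ground variable $G \in \mathcal{G}(\mathcal{E})$ is given by a unique Boolean expression $G = f_G(\dots)$ in the error terms. Reading this expression off Equation~(\ref{The semantics of a lifted program}), its inputs are, on the one hand, the ground variables underlying the causes $C^{\kappa}$ that occur in the defining disjunction, and on the other, the error terms $U_G := \{ u(RC,\kappa) : \effect(RC)^{\kappa} = G,\ (\mathcal{E}^{\textbf{I}},\kappa) \models \condition(RC) \}$.

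First I would verify the two structural facts that make this an SCM over $\Graph_{\mathcal{E}}(\textbf{P})$. The first is that the ground variables occurring in the defining disjunction of $G$ are exactly the graph-parents $\Pa(G)$: a full interpretation $\kappa$ on $\var(RC)$ with $\condition(RC)$ satisfied, as in Definition~\ref{Semantics of Programs}, restricts to an interpretation $\iota$ on $\var(C) \cup \var(\effect(RC))$ witnessing condition i) of Definition~\ref{definition - Ground Graph}, and conversely any such $\iota$ together with its existential witness extends to a full $\kappa$; hence $C^{\kappa} \in \{ G_1, \neg G_1 \}$ with $\effect(RC)^{\kappa} = G$ corresponds precisely to an edge $G_1 \rightarrow G$. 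The second is that the error-term sets $U_G$ are pairwise disjoint, because each $u(RC,\kappa)$ carries the single effect $\effect(RC)^{\kappa}$ and therefore occurs in the equation of exactly one ground variable; since the error terms are mutually independent (Definition~\ref{Semantics of Programs}), the family $\{U_G\}_G$ forms jointly independent blocks.

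With these two facts in hand I would finish by the standard topological argument. Enumerate $\mathcal{G}(\mathcal{E}) = \{ G_1, \dots, G_k \}$ in a topological order of $\Graph_{\mathcal{E}}(\textbf{P})$, so that $\Pa(G_i) \subseteq \{ G_1, \dots, G_{i-1} \}$. By induction each $G_j$ with $j < i$ is a function of $\bigcup_{l \leq j} U_{G_l}$, hence of error terms disjoint from $U_{G_i}$, so $U_{G_i}$ is independent of $(G_1, \dots, G_{i-1})$. As $G_i = f_{G_i}(\pa(G_i), U_{G_i})$, conditioning on all predecessors affects the law of $G_i$ only through the value of $\pa(G_i)$, giving $\pi(G_i \mid G_1, \dots, G_{i-1}) = \pi(G_i \mid \pa(G_i))$. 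The chain rule then yields the recursive factorization $\pi(g_1, \dots, g_k) = \prod_i \pi(g_i \mid \pa(G_i))$ of Definition~\ref{definition - Bayesian network}, which by the cited equivalence is exactly the assertion that $\textbf{P}^{(\pi,\mathcal{E})}$ is Markov to $\Graph_{\mathcal{E}}(\textbf{P})$.

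I expect the only genuine obstacle to be the first structural fact: matching the quantifier pattern of Definition~\ref{definition - Ground Graph} (an interpretation on $\var(C) \cup \var(\effect(RC))$ with an existentially closed condition) against that of Equation~(\ref{The semantics of a lifted program}) (a full interpretation on $\var(RC)$ with satisfied condition), and confirming that negative causes and syntactically redundant parents cause no harm. A declared parent on which $f_G$ does not in fact depend merely adds a vacuous conditioning variable and leaves the factorization intact; the direction that truly matters is that \emph{every} actual argument of $f_G$ is captured as a parent, so that $G_i$ is indeed a function of $\Pa(G_i)$ and $U_{G_i}$ alone.
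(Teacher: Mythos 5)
The paper never proves this proposition itself: it is imported verbatim from the cited reference \cite{fcm-semantics} and stated with a closing square, so there is no in-paper proof to compare against; your attempt has to be judged on its own merits. On those merits it is correct, and it is the standard argument for why an acyclic functional causal model with jointly independent exogenous noise is Markov to its induced DAG (Pearl's causal Markov theorem, \cite[\S 1.4]{Causality}): pairwise disjoint, mutually independent blocks of error terms $U_G$; a topological-order induction showing each $G_j$ is a function of $\bigcup_{l \leq j} U_{G_l}$, hence $\pi(G_i \mid G_1,\dots,G_{i-1}) = \pi(G_i \mid \pa(G_i))$; the chain rule giving the recursive factorization of Definition~\ref{definition - Bayesian network}; and the equivalence of factorization with the Markov condition, which the paper itself invokes via \cite[\S 1.2.3]{Causality}. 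You also identified and correctly resolved the one genuinely delicate point: reconciling the full interpretations $\kappa$ on $\var(RC)$ in Equation~(\ref{The semantics of a lifted program}) with the partial interpretations $\iota$ carrying an existentially closed condition in Definition~\ref{definition - Ground Graph}. Your observation that only one inclusion is load-bearing is exactly right --- every ground variable occurring in the defining equation of $G$ must be a graph-parent (which follows by restricting $\kappa$ to $\var(C) \cup \var(\effect(RC))$, the restriction trivially witnessing the existential closure), while graph-parents on which $f_G$ happens not to depend are vacuous and leave the factorization intact. Two cosmetic remarks, neither a gap: the paper's Markov condition only demands independence of each variable from its \emph{causes} given its direct causes, which is weaker than the non-descendants statement your factorization delivers, so your conclusion subsumes it; and the conditional probabilities in the chain-rule step require the usual convention when a conditioning event has probability zero, a routine matter for finite Boolean distributions. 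Your proof is thus a sound self-contained replacement for the external citation.
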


\begin{example}
The program structure $\textbf{P}$ of Example \ref{example - program structure} is indeed acyclic in our sense. Further, in the situation of Example \ref{example - grounding of a program structure} we obtain the following ground graph.
\newline
\adjustbox{scale=0.5,center}{
\begin{tikzcd}
                             & {smokes(mary,r_1)} \arrow[rdd] \arrow[rrdd] \arrow[rrrdd] & {smokes(john,r_1)} \arrow[rdd] \arrow[dd] \arrow[rrdd] & {smokes(mary,r_2)} \arrow[ldd] \arrow[dd] \arrow[rdd] & {smokes(john,r_2)} \arrow[dd] \arrow[ldd] \arrow[lldd] &                              &                               &                              \\
                             &                                                           &                                                        &                                                       &                                                        & {smokes(mary,r_3)}           & {smokes(mary,r_4)} \arrow[rd] & {smokes(john,r_4)} \arrow[d] \\
                             &                                                           & fire(r_1)                                              & fire(r_2)                                             & fire(r_3)                                              & {smokes(john,r_3)}           & {opens(john,t_5)} \arrow[rd]  & fire(r_4)                    \\
                             & leaks(t_1) \arrow[ru] \arrow[rru] \arrow[rrru]            &                                                        & leaks(t_2) \arrow[lu] \arrow[u] \arrow[ru]            & {opens(john,t_3)} \arrow[r]                            & leaks(t_3)                   & leaks(t_4)                    & leaks(t_5) \arrow[u]         \\
{opens(mary,t_1)} \arrow[ru] & {opens(john,t_1)} \arrow[u]                               & {opens(mary,t_2)} \arrow[ru]                           & {opens(john,t_2)} \arrow[u]                           & {opens(mary,t_3)} \arrow[ru]                           & {opens(mary,t_4)} \arrow[ru] & {opens(john,t_4)} \arrow[u]   & {opens(mary,t_5)} \arrow[u] 
\end{tikzcd}}
\label{example - grounded program}
\end{example}

As we now defined the necessary refinement of the ProbLog language \cite{Problog}, we can return to reasoning about conditional independence.

\section{A Symbolic Calculus for Conditional Independencies}
\label{sec - A Symbolic Calculus for Conditional Independencies}

By Proposition \ref{proposition - ground graph}, for every external database $\mathcal{E}$ the causal information about the conditional independencies implied by our program structure $\textbf{P}$ lies in the ground graph~$\Graph_{\mathcal{E}} (\textbf{P})$. Hence, applying Theorem~\ref{theorem - correctness of d-separation} we obtain a correct conditional independence oracle if we combine a Prolog representation of the ground graph $\Graph_{\mathcal{E}} (\textbf{P})$ with the d-separation oracle in Program \ref{program - dconnectes}.

Assume we are given a meta-predicate \mintinline{prolog}{random/2} that indicates all random predicates together with their arities. 

\begin{example}
In Example \ref{example - program structure}, this means we add the facts \\
\mintinline{prolog}{random(opens,2). random(leaks,1). random(smokes,2). random(fire,1).} \\ to the program structure $\textbf{P}$.
\end{example}

In this case, Program \ref{program - ground graph} computes the ground graph $\Graph_{\mathcal{E}} (\textbf{P})$ when applied to the program structure~$\textbf{P}$ and an external database $\mathcal{E}$.

\begin{program}[Representation of the Ground Graph]
$\text{ }$
\begin{minted}[mathescape]{prolog}
underlyingAtom(Atom,Atom) :- Atom \= (\+_).
underlyingAtom(Literal,Atom) :- Literal = (\+Literal1), 
	underlyingAtom(Literal1,Atom). 
% Determine the conditions in a body of a random clause  
conditions(Body,true) :- Body \= (_,_), underlyingAtom(Body,Atom), 
    functor(Atom,R,Arity), random(R,Arity).
conditions(Body,Body) :- Body \= (_,_), underlyingAtom(Body,Atom), 
    functor(Atom,R,Arity), \+random(R,Arity).
conditions((C1,Body), Cond) :- underlyingAtom(C1,Atom), 
    functor(Atom,R,Arity), random(R,Arity), conditions(Body,Cond).
conditions((C1,Body),(C1,Cond)) :- underlyingAtom(C1,Atom), 
    functor(Atom,R,Arity), \+random(R,Arity), conditions(Body,Cond).
% Calculate the potential parents
potentialParents(Body,[Atom]) :- Body \= (_,_), underlyingAtom(Body,Atom), 
    functor(Atom,R,Arity), random(R,Arity).
potentialParents(Body,[]) :- Body \= (_,_), underlyingAtom(Body,Atom), 
    functor(Atom,R,Arity), \+random(R,Arity).
potentialParents((C1,Body), [Atom|Parents]) :- underlyingAtom(C1,Atom), 
    functor(Atom,R,Arity), random(R,Arity), potentialParents(Body,Parents).
potentialParents((C1,Body), Parents) :- underlyingAtom(C1,Atom), 
    functor(Atom,R,Arity), \+random(R,Arity), 
    potentialParents(Body,Parents).
%Check whether edge exists
(X ---> Y) :- random(R,Arity), functor(Y,R,Arity), clause((_::Y),Body), 
	potentialParents(Body,Parents), member(X,Parents), 
	conditions(Body,Conds), Conds.
\end{minted}
\label{program - ground graph}
\end{program}

Together, Program \ref{program - dconnectes} and \ref{program - ground graph} yield a meta-interpreter that computes valid conditional independence statements implied by an acyclic, stratified program structure and an external database. In the appendix we prove the following result that establishes the completeness of this conditional independence oracle for a fragment of program structures.

\begin{theorem}[Completeness]
Let $\textbf{P}$ be a positive program structure and let $\mathcal{E}$ be an external database such that the ground graph $\Graph_{\mathcal{E}}(\textbf{P})$ is singly connected. Further, let $\pi$ be a  choice of parameters for~$\textbf{P}$ with values in $(0,1)$ that yields proper unconditional probabilities for all ground variables. In this case, the grounding $\textbf{P}^{(\pi, \mathcal{E})}$ yields a distribution that is faithful to the ground graph~\mbox{$\Graph_{\mathcal{E}}(\textbf{P})$}.
 
In particular, if the ground graph $\Graph_{\mathcal{E}}(\textbf{P})$ is singly connected for every external database $\mathcal{E}$ and has a generalized ProbLog clause grounding to a probabilistic fact for every source in~$\Graph_{\mathcal{E}}(\textbf{P})$, our meta-interpreter is correct and complete for every choice of parameters~$\pi$ of $\textbf{P}$ with values in~$(0,1)$.  
\label{theorem - completeness}
\end{theorem}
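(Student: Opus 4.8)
The plan is to prove the two assertions in turn. For the first, recall that Proposition~\ref{proposition - ground graph} already gives the Markov condition, so by Definition~\ref{definition : faithful Bayesian network} faithfulness reduces to the converse of Theorem~\ref{theorem - correctness of d-separation}: whenever $\textbf{Z}$ d-connects two ground variables $A$ and $B$, we must exhibit values $a,b,\textbf{z}$ witnessing that $A$ and $B$ are dependent conditioned on $\textbf{Z}$. I would argue the contrapositive, so that it suffices to show a single d-connecting path forces a genuine dependence. The role of single-connectedness enters immediately: since $\Graph_{\mathcal{E}}(\textbf{P})$ is singly connected, between $A$ and $B$ there is at most one undirected path, hence a unique candidate d-connecting path $P$, and no cancellation between competing paths can occur. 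This is exactly the phenomenon behind McDermott's failure of faithfulness~\cite{FailureOfFaithfulness}, which the hypothesis rules out.

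The second ingredient is that positivity turns every local mechanism into a strict monotone dependence. Because $\textbf{P}$ is positive, the Boolean equation~(\ref{The semantics of a lifted program}) defining each ground variable $G$ is a disjunction of conjunctions of parents and error terms with no negation, hence a monotone Boolean function of $\Pa(G)$ and of the independent error terms $u(RC,\kappa)$. Since $\pi$ takes values in $(0,1)$ and all unconditional probabilities are proper, I would show that raising a single parent from $0$ to $1$ strictly increases the probability that $G=1$, giving a strict positive association across each edge $X \to G$. Dually, at a collider the monotonicity yields a strict \emph{explaining-away} effect: conditioning on the collider, or on one of its descendants lying in $\textbf{Z}$, strictly changes the association between the collider's two path-neighbours. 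The properness hypothesis guarantees that all the relevant conditional probabilities lie strictly inside $(0,1)$, so none of these local associations degenerates.

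With these two facts I would compose the local associations along the unique path $P$ by induction on its length. At each non-collider $N$ of $P$ (which, being on a d-connecting path, is not in $\textbf{Z}$) the chain/fork structure propagates a strict association of a definite sign; at each collider $C$ of $P$ (which is activated by a descendant in $\textbf{Z}$) the explaining-away step contributes a strict association of the opposite sign. Marginalising the off-path variables is justified by the Markov condition together with single-connectedness, which separates the rest of the graph from $P$ once we condition on $\textbf{Z}$. I expect the heart of the argument, and its main obstacle, to be this composition step: one must construct explicit witnessing values $a,b,\textbf{z}$ and verify that the strictly signed contributions multiply to a nonzero net effect rather than cancelling. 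The single-path hypothesis removes between-path cancellation and strictness from positivity removes within-path degeneracy, but keeping careful track of the alternating signs across colliders and non-colliders — and of how conditioning on a descendant rather than on the collider itself weakens but does not kill the explaining-away association — is the delicate book-keeping the formal proof must carry out.

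Finally, for the ``in particular'' statement I would observe that correctness of the meta-interpreter holds unconditionally: Programs~\ref{program - dconnectes} and~\ref{program - ground graph} compute d-separation in $\Graph_{\mathcal{E}}(\textbf{P})$, which by Proposition~\ref{proposition - ground graph} and Theorem~\ref{theorem - correctness of d-separation} is always sound for conditional independence. Completeness is precisely faithfulness, so it remains to check that the stated hypotheses imply those of the first part for every $\mathcal{E}$ and every $\pi$ with values in $(0,1)$. Single-connectedness is assumed for every $\mathcal{E}$, so the only thing to verify is proper unconditional probabilities. Here the clause grounding to a probabilistic fact at every source gives each source a probability $1-\prod(1-\pi(RC))\in(0,1)$; I would then propagate properness through the monotone equations~(\ref{The semantics of a lifted program}) by induction along the DAG, using that all error terms fail with positive probability (forcing $P(G=1)<1$) while some satisfying assignment of the parents together with a firing error term has positive probability (forcing $P(G=1)>0$). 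This establishes the faithfulness hypotheses everywhere, and hence correctness and completeness of the meta-interpreter.
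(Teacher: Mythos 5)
You correctly reduce faithfulness to showing that d-connection implies dependence (Proposition~\ref{proposition - ground graph} giving the Markov half), you identify the two operative hypotheses (positivity for strict local dependence, single-connectedness for uniqueness of the path), and you set up an induction on path length --- all of which matches the paper's strategy. But the inductive step, composing local dependencies along the path, is exactly where your proposal stops: you yourself flag it as ``the delicate book-keeping the formal proof must carry out,'' and no mechanism for it is supplied. This is a genuine gap, not a routine detail. Dependence is not transitive in general, so ``$A$ dep $C$ and $C$ dep $B$ given $\textbf{Z}$'' does not by itself yield ``$A$ dep $B$ given $\textbf{Z}$''; your plan of multiplying ``strictly signed contributions'' is never reduced to a checkable lemma, and at colliders the claim that monotonicity yields a strict explaining-away effect ``of the opposite sign'' --- especially when one conditions on a descendant of the collider rather than the collider itself --- is asserted, not proved. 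That assertion \emph{is} the hard part of the theorem.

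The paper closes this gap with two devices absent from your proposal. First, the weak transitivity lemma for Boolean variables (Pearl): if $A$ and $C$ are dependent given $\textbf{Z}$ and $C$ and $B$ are dependent given $\textbf{Z}$, then $A$ and $B$ are dependent given $\textbf{Z}$ or given $\{C\}\cup\textbf{Z}$; single-connectedness plus the \emph{correctness} of d-separation kills the second alternative, because the non-collider $C$ blocks the unique path. This eliminates all sign bookkeeping --- no quantitative cancellation argument is ever needed, which is precisely what your sign-tracking route would have to supply (for Boolean variables it can be rescued, since nonzero covariance factorizes through a binary middle node, but you neither state nor prove such a lemma). Second, for colliders the paper runs a separate, strengthened induction showing \emph{strict positive correlation} along directed paths: the base case uses the Clark completion and Lemma~\ref{propdep}, and it also needs the co-parents of a node to be conditionally independent --- another use of single-connectedness that your edge case glosses over. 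It then reduces to the case where the observed descendants of the collider form an antichain, introduces an auxiliary node $\bigwedge C$ (the conjunction of those descendants, with an arrow from each), and applies weak transitivity to that node to convert positive correlation with $\bigwedge C$ into the required conditional dependence. Your handling of the ``in particular'' clause (propagating proper probabilities from sources through the monotone groundings) is fine, and is in fact more explicit than the paper, but the core composition argument of the theorem is missing from your proposal.
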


As singly connected ground graphs imply mutual independence of body atoms, the fragment of 
Theorem \ref{theorem - completeness} reminds us of the (Ind,Ind) assumption 
\cite{PITAOPT}, under which marginal probabilities can be computed in a much simpler way.

\begin{example}
As the program structure $\textbf{P}$ of the introduction lies in the fragment of Theorem \ref{theorem - completeness}, Program~\ref{program - dconnectes} and \ref{program - ground graph} yield a correct and complete conditional independence oracle whenever we choose probabilities in $(0,1)$ for every generalized ProbLog clause in $\textbf{P}$.   

%
\end{example} 

\section{Experimental Evaluation} 

To evaluate our conditional independence oracle in Program \ref{program - dconnectes} and \ref{program - ground graph}, we investigate the lifted program $\textbf{P}$  that defines the random predicate $random(p,1)$ with the random part below. 
\begin{align*}
& random(p,1). &
0.5 :: p(X) \leftarrow n(X). &&
0.5 :: p(Y) \leftarrow  p(X), n(X), n(Y), e(X,Y).
\end{align*}
Further, the program $\textbf{P}$ expects a directed acyclic graph $G$ as external database, represented by storing its nodes~$n/1$ and its edges $e/2$. Here, $p/1$ is a random property that holds with a base probability of~$0.5$ for every node of $G$. If we observe $p(n)$ for a node $n$ of $G$, this enhances the probability of $p(c)$ for every child $c$ of $n$ by a factor of $0.5$. 

Further, we generate directed acyclic graphs $G$ with $S := 5i$ nodes for $1 \leq i \leq 20$ by sampling five times from the following ProbLog program. 
\begin{align*}
&n(1). &
n(Y) \leftarrow n(X),~Y=X+1,~X<S+1.
&&  \dfrac{1}{\sqrt{S}}  :: e(X,Y) \leftarrow n(X),~n(Y),~X<Y.
\end{align*}
In this way we obtain five directed acyclic graphs on the nodes \mbox{$\{ 1,...,S \}$} that 
have an edge $i \rightarrow j$ with probability $\sqrt{S}^{-1}$ for every~\mbox{$1 \leq i < j \leq S$}. Next, we choose for every graph size $S$ ten tuples~$(a,b)$ of even numbers between 
one and $S$. We then process the queries 
$$
dseparates(p(a),p(b),[]) \text{ and } dseparates(p(a),p(b),[p(i):~i\text{ odd number between 1 and S}])
$$ on~$\textbf{P}$ for every graph of size $S$ using the meta-interpreter of Program \ref{program - dconnectes} and \ref{program - ground graph} with a timeout of $10$ seconds. 

Finally, we aim to compare our approach with checking the definition of independence, i.e.~with checking whether $\pi(p(a) \land p(b)) = \pi(p(a)) \cdot \pi(p(b))$. To this aim we additionally calculate the probabilities of $p(a)$, $p(b)$ and $p(a) \land p(b)$ with the evidence $\{ p(i):~i \text{ odd number between 1 and S} \}$ and without evidence using ProbLog 2 with a timeout of $10$ seconds. The median and maximal run times of the queries described above are visualized in Figure \ref{fig:scale}, clearly demonstrating that our approach performs significantly faster than checking (conditional) independencies with exact inference in ProbLog~2.

\begin{figure}[tbph]
    \centering
    \includegraphics[width=0.49\textwidth]{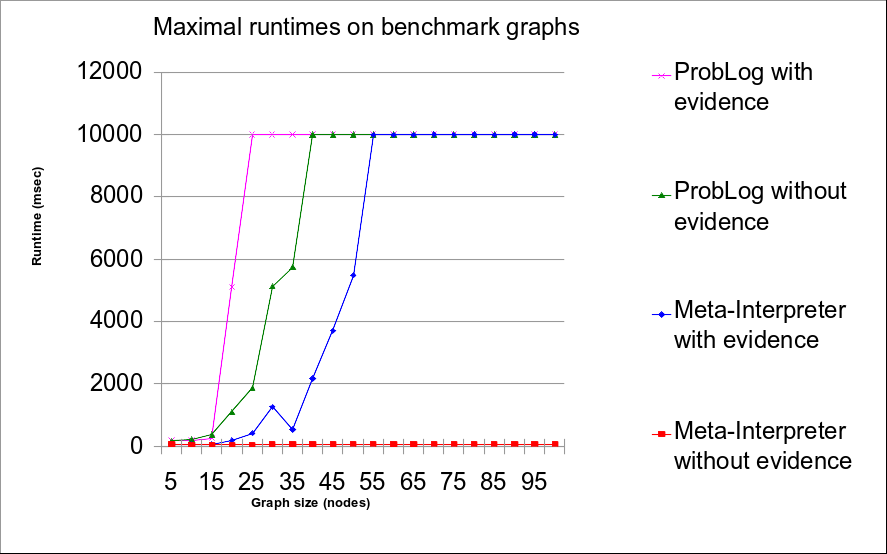}
    \includegraphics[width=0.49\textwidth]{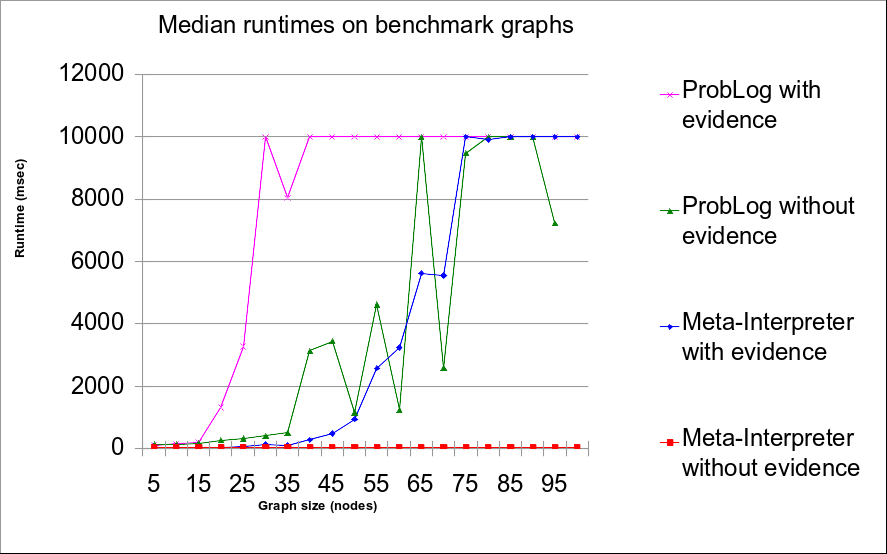}
    \caption{}
    \label{fig:scale}
\end{figure} 
 
\section{Conclusion}
First, we establish a framework for lifted probabilistic logic programming, i.e.~we abstract the notion of a ProbLog program away from a concrete database and away from the concrete probabilities mentioned in the clauses. In this way we obtain so-called program structures. As our main result we then use the theory of d-separation from Bayesian networks to reason about conditional independence on the basis of these program structures. We also implement the corresponding independence oracle as a meta-interpreter in Prolog. Finally, we prove the completeness of our conditional independence oracle for a fragment of programs structures in Theorem \ref{theorem - completeness}. The paper then closes with an experimental evaluation of our results revealing that our approach processes significantly faster than checking the definition of independence with exact inference in ProbLog 2.

As the theory of d-separation is the basis of causal structure discovery in Bayesian networks, one direction for future work is to develop the analogue of this theory in probabilistic logic programming. In this context, we note that causal faithfulness is needed to extract possible causal structures from an observed distribution. Furthermore, Theorem~\ref{theorem - obstruction to causal faithfulness} suggests that our independence oracle is complete for most program structures and choices of parameters.  Hence, in our opinion determining completeness results for more general fragments of program structures is a promising direction for future work.

\bibliographystyle{eptcs}

\bibliography{literature}

\section*{Appendix}

\begin{lemma}[Weak Transitivity, $\text{\cite[p.137, Exercise 3.10]{BNPearl}}$]
Let $\textbf{A}$ and $\textbf{C}$ be sets of random variables. Further, let $B$ be a Boolean random variable and let $\pi$ be a distribution on $\textbf{A} \cup \textbf{C} \cup \{ B \}$.
Choose possible values $\textbf{a}$ and $\textbf{c}$ for the random variables in $\textbf{A}$ and $\textbf{C}$ respectively. Further, choose a possible value $b$ for~$B$. If $b$ is both dependent on $\textbf{a}$ and on $\textbf{c}$, we find that $\textbf{a}$ and $\textbf{c}$ are marginally dependent or they become dependent once we condition on $b$.~$\square$   
\end{lemma}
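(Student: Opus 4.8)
The plan is to reduce the statement to a single algebraic identity for Bernoulli variables, namely the law of total covariance specialised to the Boolean variable $B$. First I would replace the three events by their indicators: write $U := \mathbf{1}[\textbf{A}=\textbf{a}]$, $W := \mathbf{1}[\textbf{C}=\textbf{c}]$ and $V := \mathbf{1}[B=b]$. Every (in)dependence occurring in the lemma is then a statement about these three Bernoulli variables: ``$b$ is dependent on $\textbf{a}$'' means $\cov(U,V)\neq 0$, ``$b$ is dependent on $\textbf{c}$'' means $\cov(W,V)\neq 0$, ``$\textbf{a},\textbf{c}$ are marginally dependent'' means $\cov(U,W)\neq 0$, and ``$\textbf{a},\textbf{c}$ become dependent once we condition on $b$'' is the non-vanishing of a conditional covariance. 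Since $B$ is Boolean, $V$ distinguishes only the two outcomes $b$ and $\neg b$, so conditioning on $B$ amounts to splitting into exactly the two slices $B=b$ and $B=\neg b$; this is the point where the hypothesis that $B$ is Boolean is essential.

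I would then set $\beta := \pi(B=b)$ and, for $v\in\{b,\neg b\}$, the conditional probabilities $\alpha_v := \pi(\textbf{A}=\textbf{a}\mid B=v)$, $\gamma_v := \pi(\textbf{C}=\textbf{c}\mid B=v)$ together with the conditional dependence gaps $d_v := \pi(\textbf{A}=\textbf{a},\textbf{C}=\textbf{c}\mid B=v) - \alpha_v\gamma_v$. Expanding the marginal covariance against the two slices yields the key identity
$$\cov(U,W) = \beta\, d_b + (1-\beta)\, d_{\neg b} + \beta(1-\beta)(\alpha_b - \alpha_{\neg b})(\gamma_b - \gamma_{\neg b}),$$
whose last summand is precisely $\cov\big(\pi(\textbf{a}\mid B),\pi(\textbf{c}\mid B)\big)$. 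Along the way I would record the edge identities $\cov(U,V) = \beta(1-\beta)(\alpha_b - \alpha_{\neg b})$ and $\cov(W,V) = \beta(1-\beta)(\gamma_b - \gamma_{\neg b})$, so that the two dependence hypotheses become $\alpha_b\neq\alpha_{\neg b}$ and $\gamma_b\neq\gamma_{\neg b}$, once the degenerate cases $\beta\in\{0,1\}$ are discarded (if $\beta\in\{0,1\}$ then $b$ is deterministic and cannot depend on anything, contradicting the hypotheses, while conditioning on $b$ is moreover undefined for $\beta=0$).

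The conclusion then follows by contraposition. Assuming that $\textbf{a},\textbf{c}$ are marginally independent and independent in the conditional distribution given $B$ — that is $\cov(U,W)=0$, $d_b=0$ and $d_{\neg b}=0$ — the identity collapses to $\beta(1-\beta)(\alpha_b-\alpha_{\neg b})(\gamma_b-\gamma_{\neg b})=0$, whence $\alpha_b=\alpha_{\neg b}$ or $\gamma_b=\gamma_{\neg b}$. By the edge identities this reads $\cov(U,V)=0$ or $\cov(W,V)=0$, i.e. $b$ is independent of $\textbf{a}$ or of $\textbf{c}$, which is the desired contradiction.

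The main obstacle is precisely the single-outcome wording of the conclusion. The identity makes transparent that both conditional gaps $d_b$ and $d_{\neg b}$ must enter: setting only $\cov(U,W)=0$ and $d_b=0$ leaves $d_{\neg b}$ free to absorb the product term, so the $b$-slice alone forces neither dependence conclusion — one can in fact exhibit three Bernoulli variables with $\cov(U,V),\cov(W,V)\neq 0$, $\cov(U,W)=0$ and $d_b=0$ (the $\neg b$-slice then carries all the dependence). Accordingly I would read ``condition on $b$'' as conditioning on the Boolean variable $B$, i.e. on both outcomes $b$ and $\neg b$; it is exactly the Boolean-ness of $B$ — which leaves the complementary slice $\neg b$ as the only remaining one — that allows the total-covariance identity to tie the two slices back to the marginal and close the argument.
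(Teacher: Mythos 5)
The paper does not actually prove this lemma: it is stated with the proof omitted (the $\square$ is in the statement itself) and delegated to Pearl's book, Exercise 3.10, so there is no in-paper argument to compare against --- your proposal supplies a self-contained proof where the paper has none. Your argument is correct. The decomposition is the law of total covariance over the two-point split of the Boolean $B$: with your notation, $\cov(U,W)=\beta d_b+(1-\beta)d_{\neg b}+\beta(1-\beta)(\alpha_b-\alpha_{\neg b})(\gamma_b-\gamma_{\neg b})$, and the edge identities $\cov(U,V)=\beta(1-\beta)(\alpha_b-\alpha_{\neg b})$ and $\cov(W,V)=\beta(1-\beta)(\gamma_b-\gamma_{\neg b})$ all check out, as does the dismissal of $\beta\in\{0,1\}$; since all quantities are indicator covariances, vanishing covariance is equivalent to the relevant event-level independence, so the contraposition is airtight. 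Your interpretive point is also genuinely right, not a hedge: under the literal single-slice reading (``dependent given the event $B=b$'' only, i.e.\ $d_b\neq 0$) the statement is false --- your counterexample family works, e.g.\ $\beta=\tfrac12$, $\alpha_b=\gamma_b=0.6$, $\alpha_{\neg b}=\gamma_{\neg b}=0.4$, the $b$-slice independent and the $\neg b$-slice carrying joint probability $0.14$ gives $\cov(U,V),\cov(W,V)\neq 0$, $\cov(U,W)=0$, $d_b=0$ --- so the conclusion must be read as dependence conditional on the variable $B$, i.e.\ $d_b\neq 0$ or $d_{\neg b}\neq 0$. That reading is exactly how the lemma is deployed in the paper's proof of Theorem~\ref{theorem - completeness}: there one conditions on the variable $C$ (respectively $\bigwedge C$) added to the conditioning set, and the alternative is excluded by correctness of d-separation, which rules out dependence under \emph{every} valuation of the conditioning variables, so the disjunction over both slices is all that is needed. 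One small remark: the paper applies the lemma with an ambient conditioning set $\textbf{Z}'$, which your statement omits; this is harmless, since your entire argument goes through verbatim for the conditional distribution $\pi(\cdot\mid \textbf{Z}'=\textbf{z}')$ in place of $\pi$.
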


\begin{lemma}\label{propdep}
  Let $\pi$ be a distribution on the set $\Omega$ of all valuations on the variables $X,X_1,\dots,X_n$ such that $\{X\} \cup \{X_i\}_{i=1,\dots,n}$ encodes a mutually independent set of events  and let $\varphi$ be a propositional formula (in minimal DNF) in which $X$ occurs, and only occurs positively.

  Then $\varphi$ and $X$ are positively correlated as events on $\Omega$.
\end{lemma}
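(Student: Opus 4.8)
The plan is to exploit that, because $X$ occurs only positively, $\varphi$ is monotone in $X$, and to combine this monotonicity with the independence of $X$ from the block $X_1,\dots,X_n$. First I would split the minimal DNF $\varphi = \bigvee_j C_j$ into the clauses that mention $X$ and those that do not. Since $X$ occurs only positively, every clause $C_j$ containing $X$ has the shape $C_j = X \wedge D_j$, where $D_j$ is a conjunction of literals over $X_1,\dots,X_n$ alone (a clause cannot contain $\neg X$). Collecting terms, I would rewrite $\varphi \equiv (X \wedge \psi) \vee \chi$, where $\psi := \bigvee_{j : X \in C_j} D_j$ and $\chi := \bigvee_{j : X \notin C_j} C_j$ are propositional formulas in the variables $X_1,\dots,X_n$ only.

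Next I would condition on $X$. Because $X$ is independent of $X_1,\dots,X_n$, conditioning on $X$ being true leaves the joint law of the remaining variables unchanged and turns $\varphi$ into $\psi \vee \chi$, whereas conditioning on $X$ being false turns $\varphi$ into $\chi$; hence $\pi(\varphi \mid X) = \pi(\psi \vee \chi)$ and $\pi(\varphi \mid \neg X) = \pi(\chi)$, the right-hand sides being probabilities over the independent block. As $\chi$ logically implies $\psi \vee \chi$, this gives $\pi(\varphi \mid X) \geq \pi(\varphi \mid \neg X)$. I would then convert this into a statement about covariance: writing $\pi(\varphi) = \pi(X)\,\pi(\varphi \mid X) + \pi(\neg X)\,\pi(\varphi \mid \neg X)$, a one-line computation yields
$$\pi(\varphi \wedge X) - \pi(\varphi)\,\pi(X) = \pi(X)\,\pi(\neg X)\,\bigl(\pi(\varphi \mid X) - \pi(\varphi \mid \neg X)\bigr) \geq 0,$$
which is the asserted positive correlation.

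The genuinely delicate point, and the step I expect to be the main obstacle, is strictness, and this is exactly where the \emph{minimality} of the DNF is needed, since monotonicity alone only produces the non-strict inequality above. For strict correlation I need $\pi(X) \in (0,1)$ together with $\pi(\psi \vee \chi) > \pi(\chi)$, i.e. $\pi(\psi \wedge \neg\chi) > 0$. Here I would argue from irredundancy of the minimal DNF: some clause $C = X \wedge D$ of $\varphi$ contains $X$, and because it cannot be dropped without changing $\varphi$, there is a valuation satisfying $C$ but none of the remaining clauses. Its restriction to $X_1,\dots,X_n$ therefore satisfies $D$, hence $\psi$, while falsifying $\chi$, whose clauses do not mention $X$ and so are unaffected by the value of $X$. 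The subtle part is being explicit about why this separating valuation carries positive mass: since the marginals of $X$ and of each $X_i$ are nondegenerate, i.e. lie in $(0,1)$ under the standing assumption on the parameters, every full valuation of the independent block has strictly positive probability, so $\pi(\psi \wedge \neg\chi) > 0$ and the correlation is strict. Extracting this concrete witness from the abstract hypothesis ``minimal DNF'' is the crux of the argument.
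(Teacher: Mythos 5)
Your proof is correct and is essentially the paper's own argument: your decomposition $\varphi \equiv (X\wedge\psi)\vee\chi$ with $\pi(\varphi\mid X)=\pi(\psi\vee\chi)$ is exactly the paper's substitution of $\top$ for $X$ (the paper then compares $\pi(\psi)$ with $\pi(\varphi)$ directly rather than routing through $\pi(\varphi\mid\neg X)$ and a covariance identity, but this is cosmetic), and both proofs invoke minimality of the DNF to make the inequality strict. If anything, you are more careful than the paper: you state explicitly that strictness requires the marginals to be nondegenerate so that the witnessing valuation carries positive mass, a hypothesis the paper's proof uses silently in its final step.
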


\begin{proof}
  We must show that $\pi (\varphi\mid X) > \pi(\varphi)$.
  Let $\psi$ be the formula obtained by substituting $\top$ for $X$ in~$\varphi$.
  Then we have the following: \[\pi(\varphi\mid X) = \pi(\psi\mid X) = \pi(\psi).\]
  The first equality holds because $\varphi$ and $\psi$ are equivalent on valuations satisfying $X$, and the second equality holds because $\pi$ is an independent distribution.
  As $\psi$ is staisfied by all valuations satisfying~$\varphi$, but also on at least one valuation not satisfying $\varphi$ (since $X$ occurs in the minimal DNF formula $\varphi$),
  this implies the statement. 
\end{proof}

\begin{proof}[Proof of Theorem \ref{theorem - completeness}]
Let $G := \Graph_{\mathcal{E}}(\textbf{P})$ be the ground graph of the program structure $\textbf{P}$ with respect to the external database $\mathcal{E}$.
Let $P$ be a d-connecting path from $A$ to $B$ over a subset of nodes $\textbf{Z}$ in $G$.
We show that for all truly probabilistic choices of parameters, $A$ and $B$ are dependent over $\textbf{Z}$.
Let $\pi$ be the probability distribution on the random variables corresponding to such a choice of parameters. 
We proceed by induction on the length of~$P$.

As we use this in the argument for colliders below,
we first show that for directed paths, one obtains the stronger statement that $A$ and $B$ are positively correlated over $\textbf{Z}$.

We begin with paths of length two.
If the length of $P$ is two, there is an edge between~$A$ and $B$. Assume without loss of generality that the edge goes from $A$ to $B$. 
Consider the definition $\varphi$ of $B$ in the Clark completion of the logic program associated with $\textbf{P}$.
This is a propositional formula in the parents of $B$ and the error terms of ground clauses with head $B$;
Since $A$ is a parent of $B$, there is a clause $\mathcal{C}$ whose body includes $A$,
and thus since $B$ is true if and only if $A$ is true when the error terms associated with all other clauses are set to false and all other facts in the body of $\mathcal{C}$ are true,
$A$ occurs in a minimal disjunctive normal form representation of $\varphi$.
As $G$ is singly connected, the only path between any two parents of $B$ is via $B \notin \textbf{Z}$,
and thus the distribution induced by~$\pi\mid Z$ on the parents of $B$ is independent by the correctness of d-separation.
Therefore~$A$ and $B$ are positively correlated by Lemma \ref{propdep}.

So assume it to be true for directed paths of length $n$ and let $A_1\dots A_n A_{n+1}$ be a directed path of length~$n+1$.
Then $\pi(A_{n+1}\mid A_1,Z)$ is equal to \[\pi(A_{n+1}\mid A_n,A_1,Z) \pi(A_n\mid A_1,Z) + \pi(A_{n+1}\mid \neg A_n,A_1,Z) \pi(\neg A_n\mid A_1,Z)\]
and  $\pi(A_{n+1}\mid Z)$ is equal to \[\pi(A_{n+1}\mid A_n,Z) \pi(A_n\mid Z) + \pi(A_{n+1}\mid \neg A_n,Z) \pi(\neg A_n\mid Z). \]
Since $P$ is the only path between $A_{1}$ and $A_{n+1}$, $A_{1}$ and $A_{n+1}$ are d-separated by $A_n$.
Thus, we find \mbox{$\pi(A_{n+1}\mid A_n,A_1,Z) = \pi(A_{n+1}\mid A_n,Z)$} and~\mbox{$\pi(A_{n+1}\mid  \neg A_n,A_1,Z) = \pi(A_{n+1}\mid \neg A_n,Z)$}.
Furthermore, by the induction hypothesis, $\pi(A_{n+1}\mid A_n,Z) > \pi(A_{n+1}\mid \neg A_n,Z)$ and~\mbox{$\pi(A_n\mid A_1,Z) > \pi(A_n\mid Z)$}.
Overall, this implies that $\pi(A_{n+1}\mid A_1,Z) > \pi(A_{n+1}\mid Z)$ as required. \hfill($\square$)

We now return to the case of a general d-connecting path.
As paths of length 2 are necessarily directed, the base step of the induction follows from the directed case above.
So assume that dependence holds for all paths of length $n$ and let $P$ be a path of length~\mbox{$n+1 > 2$}.

We proceed by appeal to weak transitivity.

Let $A'$, $C$ and $B$ be the final three nodes in $P$.
We first cover the case where $C$ is not a collider.

Then by the induction hypothesis, $A$ and $C$ are dependent over $\textbf{Z}$,
and by the length two case above, $C$ and $B$ are dependent over $\textbf{Z}$.
Thus, either $A$ and $B$ are dependent over $\textbf{Z}$,
or they are dependent over~$\{ C \} \cup \textbf{Z}$.
However, since $G$ is singly connected, $P$ is the only path from $A$ to $B$
and thus $A$ and $B$ are d-separated by~$\{ C \} \cup \textbf{Z}$.
Together with the correctness of d-separation we can conclude that $A$ and $B$ are dependent over $\textbf{Z}$.

Now consider the case of a collider where $C$ has a descendant in $\textbf{Z}$.
Note that since conditioning on a descendant $C_i$ of $C$ blocks the (only) path between $A$ and $B$ and any descendant $C'$ of $C_i$,
we can assume by correctness of d-separation without loss of generality that all descendants $C_i$ of $C$ in $\textbf{Z}$ are non-descendants of each other.
Let~\mbox{$\{C_i\}_{i=1,\dots,n}$} be the descendants of $C$ in $\textbf{Z}$, and let $\textbf{Z}' := \textbf{Z} \setminus \{C_i\}_{i=1,\dots,n}$.

We show by induction on $n$ that $\pi(X \mid C_1,\dots,C_n,\textbf{Z}') > \pi(X,\textbf{Z}')$ for~\mbox{$X\in \{A' ,B\}$}.
The base case of the induction is given by the special case of a directed path above.
So assume \mbox{$\pi(X \mid C_1,\dots,C_n,\textbf{Z}') > \pi(X\mid \textbf{Z}')$}. We want to show that~\mbox{$\pi(X \mid C_1,\dots,C_n,C_{n+1},\textbf{Z}') > \pi(X\mid \textbf{Z}')$}.
However, we know that
$$
\pi(X \mid C_1,\dots,C_n,C_{n+1},\textbf{Z}') > \pi(X \mid C_1,\dots,C_n,\textbf{Z}')
$$ 
by the special case of directed paths above, from which the claim follows. \hfill($\square$)

To apply weak transitivity, we alter the graph by introducing a new node $\bigwedge C$ with arrows from every~$C_i$ into  $\bigwedge C$.
We extend the probability distribution to $\bigwedge C$ by setting~\mbox{$\bigwedge C$} to be true of and only if all $C_i$ are true.
By the argument above, $A'$ and $B$ are both positively correlated with  $\bigwedge C$ over $\textbf{Z}'$.
Thus, by weak transitivity, either $A'$ and $B$ are dependent over $\textbf{Z}'$ or they are dependent over $\{\bigwedge C\} \cup \textbf{Z}'$.
The former is excluded by the correctness of d-separation and the fact that the collider $C$ blocks the only path from  $A'$ to $B$ (since the original graph was singly connected).
Therefore  $A'$ and $B$ are dependent over $\{\bigwedge C\} \cup \textbf{Z}'$, which implies  $A'$ and $B$ being dependent over $\textbf{Z} = C_1, \dots,C_n,\textbf{Z}'$ in the original graph.

We can now conclude the proof by using weak transitivity and singly-connectedness a final time to deduce that $A$ and $B$ are dependent over $Z$ from
the fact that $A$ and $A'$ are dependent over $Z$ (the induction hypothesis)
and the fact that $A'$ and $B$ are dependent over $Z$.
\end{proof}

\end{document}